
\documentclass[%
 aip,
  amsmath,amssymb,
  reprint,%
]{revtex4-1}

\usepackage[margin=0.75in]{geometry}

\usepackage{graphicx}
\usepackage{dcolumn}

\usepackage[english]{babel}
\usepackage[utf8]{inputenc}
\usepackage[T1]{fontenc}
\usepackage{mathptmx}
\usepackage{amssymb}
\usepackage{amsmath}
\usepackage{amsthm}
\usepackage{caption}
\usepackage{mathdots}
\usepackage{bm}

\usepackage{algorithmic}
\usepackage{algorithm}
\usepackage[ruled,vlined,algo2e]{algorithm2e}

\usepackage[all,cmtip]{xy}
\usepackage{tikz}
\usetikzlibrary{automata,arrows, positioning, shapes.geometric, shadows.blur, fit}

\usepackage{lipsum}
\usepackage{mathrsfs}

 \newtheorem{thm}{Theorem}[section]
 \newtheorem{cor}[thm]{Corollary}
 \newtheorem{lem}[thm]{Lemma}
 
 \newtheorem{defn}[thm]{Definition}

 \numberwithin{equation}{section}
 \numberwithin{algorithm}{subsection}

 \setcounter{topnumber}{2}
\setcounter{bottomnumber}{2}
\setcounter{totalnumber}{4}

\begin{document}

\preprint{AIP/123-QED}

\begin{titlepage}
\begin{center}
\vspace{3cm}

{\large \textsc{Comisi\'on Nacional de Bancos y Seguros}}\\[1.5cm]
\hrule
\vspace{.5cm}
{\textsc{Working paper:}\\
\Large \bfseries Dynamic financial processes identification using sparse regressive reservoir computers} 
\vspace{.5cm}

\hrule
\vspace{1.5cm}

\includegraphics[width=0.35\textwidth]{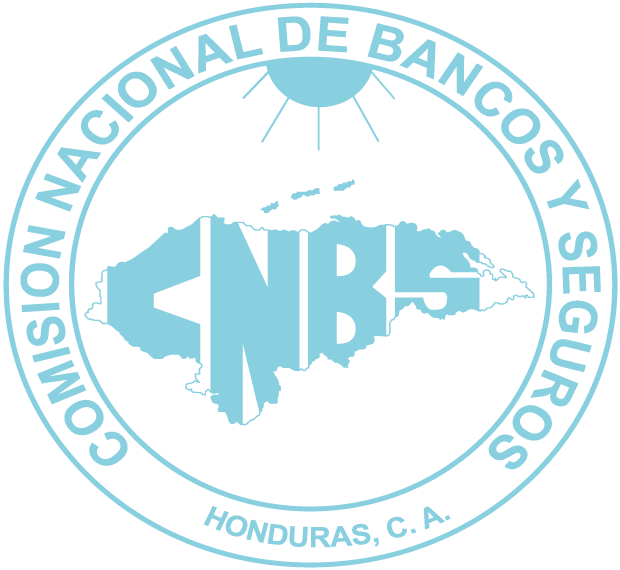}~\\[1cm]
\vspace{2cm}
\vspace{1.5cm}

\textsc{\textbf{Authors}}\\
\vspace{.5cm}
\centering

Fredy Vides\\
Idelfonso B. R. Nogueira\\
Gabriela Lopez Gutierrez\\
Lendy Banegas\\
Evelyn Flores

\vspace{3cm}

\centering \today 
\end{center}

\pagebreak

\end{titlepage}

\title{Dynamic financial processes identification using sparse regressive reservoir computers}

\author{Fredy Vides}
\email{fredy.vides@cnbs.gob.hn}
\affiliation{Center for Innovation in Scientific Computing, Universidad Nacional Aut\'onoma de Honduras, Honduras}

\author{Idelfonso B. R. Nogueira}
\email{idelfonso.b.d.r.nogueira@ntnu.no}    
\affiliation{Department of Chemical Engineering, Norwegian University of Science and Technology, Norway}

\author{Gabriela Lopez Gutierrez}
\email{gabriela.lopez@cnbs.gob.hn}
\affiliation{Department of Statistics and Research, National Commission of Banks and Insurance Companies of Honduras, Honduras}

\author{Lendy Banegas}
\email{lendy.banegas@cnbs.gob.hn}
\affiliation{Department of Statistics and Research, National Commission of Banks and Insurance Companies of Honduras, Honduras}

\author{Evelyn Flores}
\email{evelyn.flores@cnbs.gob.hn}
\affiliation{Department of Statistics and Research, National Commission of Banks and Insurance Companies of Honduras, Honduras}

\date{\today}

\begin{abstract}
In this document, we present key findings in structured matrix approximation theory, with applications to the regressive representation of dynamic financial processes. Initially, we explore a comprehensive approach involving generic nonlinear time delay embedding for time series data extracted from a financial or economic system under examination. Subsequently, we employ sparse least-squares and structured matrix approximation methods to discern approximate representations of the output coupling matrices. These representations play a pivotal role in establishing the regressive models corresponding to the recursive structures inherent in a given financial system. The document further introduces prototypical algorithms that leverage the aforementioned techniques. These algorithms are demonstrated through applications in approximate identification and predictive simulation of dynamic financial and economic processes, encompassing scenarios that may or may not exhibit chaotic behavior.
\end{abstract}

\maketitle

\begin{quotation}
The intricate dynamics inherent in financial processes often pose challenges for accurate modeling and prediction. Nonetheless, the synergy of sparse representation techniques with Nonlinear Regressive Reservoir Computers (NRRCs) proves advantageous in modeling financial processes dynamics. Firstly, this approach excels in capturing the intricate nonlinear dynamics of financial data. NRRCs, adept at modeling complex relationships between input and output data, coupled with sparse representation, effectively identify the key dynamic components, ensuring more accurate and precise modeling of underlying dynamics. Secondly, the methodology promotes efficient data utilization. NRRCs, capable of learning from a relatively small dataset, align well with the limited scope and complexity of financial processes data. By pinpointing crucial variables, the approach enhances modeling efficiency, conserving time and resources. Thirdly, the approach exhibits flexibility and adaptability. NRRCs swiftly respond to changing conditions, making them ideal for the dynamic nature of financial processes. The amalgamation of NRRCs with sparse representation facilitates the identification of changes in the underlying structure, enabling prompt adjustments to the model. In conclusion, integrating sparse representation techniques with time series models employing nonlinear regressive reservoir computers yields several advantages for financial processes dynamics modeling. It ensures accurate modeling of complex dynamics, optimizes data utilization, and provides adaptability to evolving conditions.
\end{quotation}

\section{Introduction}

Regressive models and reservoir computers are robust computational tools for the identification and simulation of financial and economic systems \cite{RC_NLFinancialSystems}. In recent years, a new class of architectures, termed next-generation reservoir computers, has emerged \cite{Gauthier2021}. In this study, we delve into the intrinsic network architecture associated with these reservoir computers, which significantly contribute to data dimensionality reduction. This architecture also facilitates the parametric identification processes by leveraging the matrix structural constraints induced by the network architecture. The document outlines key aspects of the theory and algorithms pertaining to the computation of specific types of regressive reservoir computers. The focus of this study is on reservoir computers, the architecture of which can be approximated by either linear or nonlinear regressive vector models.

The main contribution of the work reported in this document is the application of {\em collaborative schemes} involving structured matrix approximation methods, together with linear and nonlinear regressive models, to the simulation of dynamic financial processes. Some theoretical aspects of the aforementioned methods are described in \S\ref{sec:structured-approximation}. As a byproduct of the work reported in this document, a toolset of Python programs for financial and economic dynamic models identification based on the ideas presented in \S\ref{sec:structured-approximation} and \S\ref{sec:algorithms} has been developed and is available in \cite{FVides_DyNet}.

Even though, the applications of the structure preserving function approximation technology developed as part of the work reported in this document can range from numerical modeling of cyber-physical systems \cite{Yuan2019}, to climate simulation \cite{RC4ClimateSimulation}. We will focus on applications to financial processes identification in this paper.

Financial processes have become complex systems where several dynamic entities constantly communicate and affect each other. Hence, the financial  processes identification has become a critical aspect of modern finance. The identified models can become a helpful tool for institutions to analyze and predict financial trends, manage risk, and make informed investment decisions (Bodie et al., 2014). However, the complexity and uncertainty of financial markets make these tasks challenging. Financial processes often exhibit nonlinear and complex behavior, which makes it difficult to model and identify the underlying dynamics (Cont, 2001). Traditional linear models may fail to capture the intricate relationships between variables, leading to inaccurate predictions and suboptimal decision-making. 

Despite the challenges posed by the factors described above, data quality, and market efficiency, machine learning techniques offer promising solutions for improving the accuracy and utility of financial models. Machine learning has been used to identify the relationship between the key financial ratios that characterize a firm’s financial position. For instance, Dixon, Klabjan, and Bang's \cite{dixon2017classificationbased} work applies deep learning to predict financial market movements. The authors use a classification approach to predict financial market movements. Their findings suggest that deep learning algorithms can provide valuable insights and predictions about financial market movements, outperforming traditional methods. Sirignano and Cont \cite{sirignano2018universal} propose a deep learning model to identify the dynamics of price formation of a high-frequency limit order book. Their model was able to capture universal features of price formation across different markets, highlighting the potential of machine learning to model complex financial systems.

Overall, the recent literature suggests that machine learning has significant potential in modeling financial data. These techniques are increasingly utilized to capture complex patterns, make accurate predictions, and optimize decision-making in the financial domain. However, it is still an open issue to be investigated. In this scenario, this work also contributes to the field of financial data identification by applying the proposed tools in this context leading to a better understanding of the underlying financial processes addressed here.

A prototypical algorithm for the computation of sparse structured recursive models based on the ideas presented in \S\ref{sec:structured-approximation}, is presented in \S\ref{sec:algorithms}. Some numerical simulations of financial processes based on the prototypical algorithm presented in \S\ref{sec:algorithms} are documented in \S\ref{sec:experiments}.

\section{Preliminaries and Notation}
The symbols $\mathbb{R}^+$ and $\mathbb{Z}^+$ will be used to denote the positive real numbers and positive integers, respectively. For any pair $p,n\in \mathbb{Z}^+$ the expression $d_p(n)$ will denote the positive integer $d_p(n)=n(n^{p}-1)/(n-1)+1$. Given $\delta>0$, let us consider the function defined by the expression 
\begin{align*}
H_\delta(x)=\left\{
\begin{array}{ll}
1, & x>\delta\\
0,& x\leq \delta
\end{array}
\right..
\end{align*}
Given a matrix $A\in \mathbb{C}^{m\times n}$ with singular values \cite[\S2.5.3]{GoVa96} denoted by the expressions $s_j(A)$ for $j=1,\ldots,\min\{m,n\}$. We will write $\mathrm{rk}_\delta(A)$ to denote the number
\begin{align*}
\mathrm{rk}_\delta(A)=\sum_{j=1}^{\min\{m,n\}}H_\delta(s_j(A)).
\end{align*}
For a nonzero matrix $A\in \mathbb{R}^{m\times n}$, the symbol $A^+$ will be used to denote the pseudoinverse\cite[\S5.5.4]{GoVa96} of $A$.

Given a scalar time series $\Sigma=\{x_{t}\}_{t\geq 1}\subset \mathbb{R^n}$, a positive integer $L$ and any $t\geq L$, we will write $\mathbf{x}_L(t)$ to denote the vector
\[
\mathbf{x}_L(t)=\begin{bmatrix}
\mathbf{x}^{(1)}_{L}(t)^\top & \mathbf{x}^{(2)}_{L}(t)^\top & \cdots &  \mathbf{x}^{(n)}_L(t)^\top
\end{bmatrix}^\top \in \mathbb{R}^{nL},
\]
with 
\[
\mathbf{x}^{(j)}_{L}(t)=\begin{bmatrix}
x^{(j)}_{t-L+1} & x^{(j)}_{t-L+2} & \cdots & x^{(j)}_{t-1}  &  x^{(j)}_{t}
\end{bmatrix}^\top \in \mathbb{R}^{L}.
\]
for $1\leq j\leq n$, where $x_{j,s}$ denotes the scalar $j$-component of each element $x_s$ in the vector time series $\Sigma$, for $s\geq 1$.

The identity matrix in $\mathbb{R}^{n\times n}$ will be denoted by $I_n$, and we will write $\hat{e}_{j,n}$ to denote the matrices in $\mathbb{R}^{n\times 1}$ representing the canonical basis of $\mathbb{R}^{n}$ (each $\hat{e}_{j,n}$ corresponds to the $j$-column of $I_n$). For any vector $x\in \mathbb{R}^n$, we will write $\|x\|$ to denote the Euclidean norm of $x$. Given a matrix $X \in \mathbb{R}^{m\times n}$, the expression $\|X\|_F$ will denote the Frobenius norm of $X$.

For any integer $n>0$, in this article, we will identify the vectors in $\mathbb{R}^n$ with column matrices in $\mathbb{R}^{n\times 1}$.

Given two matrices $A\in \mathbb{R}^{m\times n}$, $B\in \mathbb{R}^{p\times q}$, the tensor Kronecker tensor product $A\otimes B \in \mathbb{R}^{mp\times nq}$ is determined by the following operation.
\[
A\otimes B = \begin{bmatrix}
a_{11}B & \cdots & a_{1n}B\\
\vdots & \ddots & \vdots\\
a_{m1}B & \cdots & a_{mn}B
\end{bmatrix}
\]
For any integer $p>0$ and any matrix $X\in \mathbb{R}^{m\times n}$, we will write $X^{\otimes p}$ to denote the operation determined by the following expression.
\[
X^{\otimes p} =\left\{
\begin{array}{ll}
X&, p=1\\
X\otimes X^{\otimes (p-1)}&, p\geq 2
\end{array}
\right.
\]
We will also use the symbol $\Pi_p$ to denote the operator $\Pi_p:\mathbb{R}^n\to \mathbb{R}^{n^p}$ that is determined by the expression  $\Pi_p(x):=x^{\otimes p}$, for each $x\in \mathbb{R}^n$. Given two matrices $X=[x_{i,j}]$, $Y=[y_{i,j}]$ in $\mathbb{R}^m$, we will write $X\odot Y$ to denote the operation corresponding to their Hadamard product $X\odot Y :=\begin{bmatrix}
    x_{i,j}y_{i,j}
\end{bmatrix}\in \mathbb{R}^m$.

For any matrix $A\in \mathbb{R}^{m\times n}$, we will denote by $\mathrm{colsp}(A)$ the columns space of the matrix $A$. Given a list $A_1,A_2,\ldots,A_m$ such that for $1\leq j\leq m$, $A_j\in \mathbb{R}^{n_j\times n_j}$ for some integer $n_j>0$. The expression $A_1\oplus A_2 \oplus \cdots \oplus A_m$ will denote the block diagonal matrix
\[
A_1\oplus A_2 \oplus \cdots \oplus A_m=\begin{bmatrix}
A_1 & & & \\
& A_2 & &\\
& & \ddots & \\
& & & A_m
\end{bmatrix},
\]
where the zero matrix blocks have been omitted. 

In this article, we will use the following notion of sparse representation. Given $\delta>0$ and two matrices $A\in \mathbb{R}^{m\times n}$ and $X\in \mathbb{R}^{n\times p}$, a matrix $\hat{X}\in \mathbb{R}^{n\times p}$ is an approximate sparse representation of $X$ with respect to $A$, or a sparse representation of $X$ for short, if $\|\hat{X}A-XA\|_F\leq C\delta$ for some $C>0$ that does not depend on $\delta$, and $\hat{X}$ has fewer nonzero entries than $X$.

We will write $\mathbf{S}^1$ to denote the set $\{z\in \mathbb{C}:|z|=1\}$. Given any matrix $X\in \mathbb{R}^{m\times n}$, we will write $X^\top$ to denote the transpose $X^\top\in \mathbb{R}^{n\times m}$ of $X$. A matrix $P\in \mathbb{C}^{n\times n}$ will be called an orthogonal projector whenever $P^2=P=P^\top$. Given any matrix $A\in \mathbb{R}^{n\times n}$, we will write $\Lambda(A)$ to denote the spectrum of $A$, that is, the set of eigenvalues of $A$.

\section{Structured dynamic transformation model identification}
\label{sec:structured-approximation}

Given two discrete-time dynamic systems determined by two time series $\{x_t\}_{t\geq 1}$ and $\{y_t\}_{t\geq 1}$, respectively. We will study the identification process of maps determined by the expression
\begin{align}
y_t = \mathscr{F}(x_t)+r_t,
    \label{eq:dynamic_transformation_def}
\end{align}
where $\{r_t\}_{t\geq 1}$ denotes the sequence of residual errors determined for each $t\geq 1$ by $r_t:=\|x_t-\mathscr{F}(x_t)\|$ for some suitable norm $\|\cdot\|$.

\subsection{Low-rank approximation and sparse linear least squares solvers}

\label{section:linear-solvers}

In this section, some low-rank approximation methods with applications to the solution of sparse linear least squares problems are presented.

\begin{defn}
Given $\delta>0$ and a matrix $A\in \mathbb{C}^{m\times n}$, we will write $\mathrm{rk}_{\delta}(A)$ to denote the nonnegative integer determined by the expression
\[
\mathrm{rk}_{\delta}(A)=\sum_{j=1}^{\min\{m,n\}} H_\delta(s_j(A)),
\]
where the numbers $s_j(A)$ represent the singular values corresponding to an economy-sized singular value decomposition of the matrix $A$.
\end{defn}

\begin{lem}\label{lem:rk-delta-inv}
We will have that $\mathrm{rk}_\delta\left(A^\top\right)=\mathrm{rk}_\delta(A)$ for each $\delta>0$ and each $A\in \mathbb{C}^{m\times n}$.
\end{lem}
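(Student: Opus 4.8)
The plan is to reduce everything to the elementary fact that a matrix and its transpose share the same list of singular values, after which the identity follows immediately from the definition of $\mathrm{rk}_\delta$. First I would fix $A\in\mathbb{C}^{m\times n}$ and write a full singular value decomposition $A=U\Sigma V^{*}$, where $U\in\mathbb{C}^{m\times m}$ and $V\in\mathbb{C}^{n\times n}$ are unitary and $\Sigma\in\mathbb{R}^{m\times n}$ is the rectangular diagonal matrix whose diagonal entries are $s_1(A)\ge\cdots\ge s_{\min\{m,n\}}(A)\ge 0$. Taking transposes gives $A^{\top}=\overline{V}\,\Sigma^{\top}U^{\top}$, and here is the one point that deserves care in the complex setting: since $V$ is unitary so is its entrywise conjugate $\overline{V}$, likewise $U^{\top}$ is unitary, and because the entries of $\Sigma$ are real we have $\overline{\Sigma}=\Sigma$, so $\Sigma^{\top}\in\mathbb{R}^{n\times m}$ is again a rectangular diagonal matrix carrying exactly the same diagonal entries $s_1(A),\dots,s_{\min\{m,n\}}(A)$. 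Hence $A^{\top}=\overline{V}\,\Sigma^{\top}U^{\top}$ is a singular value decomposition of $A^{\top}$, and therefore $s_j(A^{\top})=s_j(A)$ for every $j=1,\dots,\min\{m,n\}=\min\{n,m\}$; passing between the full and economy-sized forms does not alter this list of $\min\{m,n\}$ singular values, so the definition of $\mathrm{rk}_\delta$ applies verbatim to either side.

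With the singular values matched, the second step is pure bookkeeping: by definition $\mathrm{rk}_\delta(A^{\top})=\sum_{j=1}^{\min\{n,m\}}H_\delta(s_j(A^{\top}))=\sum_{j=1}^{\min\{m,n\}}H_\delta(s_j(A))=\mathrm{rk}_\delta(A)$, since the summands agree term by term and the two index ranges coincide. An alternative to writing out the decomposition, if one prefers, is to observe that $(A^{\top})^{*}A^{\top}=\overline{A}\,A^{\top}=\overline{A A^{*}}$, so its eigenvalues are the complex conjugates of those of $A A^{*}$; these are real and nonnegative, hence unchanged under conjugation, and $A A^{*}$ has the same nonzero eigenvalues as $A^{*}A$, which yields $\{s_j(A^{\top})^2\}=\{s_j(A)^2\}$ as multisets together with the same number of trailing zeros.

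I do not anticipate a genuine obstacle here; the only thing to get right is not to conflate the transpose $A^{\top}$ with the conjugate transpose $A^{*}$ when $A$ is complex, which is precisely why the argument is routed through $\overline{V}$ and the identity $\overline{\Sigma}=\Sigma$ rather than through a bare appeal to ``$A$ and $A^{*}$ have equal singular values.''
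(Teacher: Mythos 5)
Your proof is correct and follows essentially the same route as the paper: transpose a singular value decomposition of $A$ to obtain one of $A^{\top}$ with the identical list of singular values, then apply the definition of $\mathrm{rk}_\delta$ termwise. If anything, your handling of the complex case is slightly more careful than the paper's, since you track the passage through $\overline{V}$ and $\overline{\Sigma}=\Sigma$ explicitly rather than writing $V^{\top}SU^{\top}$ without comment.
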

\begin{proof}
Given an economy-sized singular value decomposition 
\begin{align*}
U\begin{bmatrix}
s_1(A) & & &\\
& s_2(A) & &\\
&&\ddots&\\
&&& s_{\min\{m,n\}}(A)
\end{bmatrix}V=A
\end{align*}
we will have that
\begin{align*}
V^\top\begin{bmatrix}
s_1(A) & & &\\
& s_2(A) & &\\
&&\ddots&\\
&&& s_{\min\{m,n\}}(A)
\end{bmatrix}U^\top=A^\top
\end{align*}
is an economy-sized singular value decomposition of $A^\top$. This implies that
\begin{align*}
\mathrm{rk}_{\delta}\left(A^\top\right)=\sum_{j=1}^{\min\{m,n\}} H_\delta(s_j(A))=\mathrm{rk}_{\delta}(A)
\end{align*}
and this completes the proof.
\end{proof}

\begin{lem}\label{lem:rk-lem}
Given $\delta>0$ and $A\in \mathbb{C}^{m\times n}$ we will have that $\mathrm{rk}_\delta(A)\leq \mathrm{rk}(A)$.
\begin{proof}
We will have that $\mathrm{rk}(A)=\sum_{j=1}^{\min\{m,n\}} H_0(s_j(A))\geq \sum_{j=1}^{\min\{m,n\}} H_\delta(s_j(A))=\mathrm{rk}_\delta(A)$. This completes the proof.
\end{proof}
\end{lem}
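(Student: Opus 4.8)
The plan is to reduce the inequality to a pointwise comparison between the step functions $H_\delta$ and $H_0$ evaluated at the singular values of $A$. First I would recall the standard fact from the singular value decomposition \cite[\S2.5.3]{GoVa96} that the rank of $A$ equals the number of strictly positive singular values, so that $\mathrm{rk}(A)=\sum_{j=1}^{\min\{m,n\}} H_0(s_j(A))$, which is exactly the defining formula for $\mathrm{rk}_\delta$ with the threshold $\delta$ replaced by $0$. This identification is immediate once one notes that an economy-sized SVD exhibits $A$ as $U\,\mathrm{diag}(s_1(A),\dots,s_{\min\{m,n\}}(A))\,V$ with $U,V$ of full column/row rank, so the rank is governed entirely by the diagonal factor.

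Next I would observe that for every $x\in\mathbb{R}$ and every $\delta>0$ one has $H_\delta(x)\le H_0(x)$: indeed, if $H_\delta(x)=1$ then $x>\delta>0$, hence $x>0$ and $H_0(x)=1$; and if $H_\delta(x)=0$ the inequality is trivial since $H_0(x)\ge 0$. Applying this pointwise bound with $x=s_j(A)$ for each $j$ and summing over $j=1,\dots,\min\{m,n\}$ then gives
\[
\mathrm{rk}_\delta(A)=\sum_{j=1}^{\min\{m,n\}} H_\delta(s_j(A))\le \sum_{j=1}^{\min\{m,n\}} H_0(s_j(A))=\mathrm{rk}(A),
\]
which is the assertion.

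There is essentially no hard step here; the only point requiring a moment's care is the identification $\mathrm{rk}(A)=\sum_j H_0(s_j(A))$, and even that is routine. One could alternatively argue via the monotonicity of the truncation threshold — a larger cutoff discards at least as many singular values — but the pointwise domination of the indicator functions is the cleanest route and mirrors the structure of Lemma~\ref{lem:rk-delta-inv}.
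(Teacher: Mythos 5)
Your proof is correct and follows exactly the same route as the paper's: identify $\mathrm{rk}(A)$ with $\sum_j H_0(s_j(A))$ and apply the pointwise bound $H_\delta(x)\le H_0(x)$ termwise. The paper compresses this into a single line, but the argument is identical.
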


\begin{thm}\label{thm:main-SpLSSolver}
Given $\delta>0$ and $y,x_1,\ldots,x_m\in \mathbb{C}^n$, let 
\begin{align*}
X&=\begin{bmatrix}
| & | &  & |\\
x_1 & x_2 & \cdots & x_{m}\\
| & | &  & |
\end{bmatrix}.
\end{align*}
If $\mathrm{rk}_\delta\left(X\right)>0$ and if we set $r=\mathrm{rk}_\delta\left(X\right)$ and $s_{n,m}(r)=\sqrt{r(\min\{m,n\}-r)}$ 
then, there are a rank $r$ orthogonal projector $Q$, $r$ vectors $x_{j_1},\ldots,x_{j_r}\in \{x_1,\ldots,x_m\}$ and $r$ scalars $c_1\ldots,c_r\in \mathbb{C}$ such that $\|X-QX\|_F\leq (s_{n,m}(r)/\sqrt{r})\delta$, and $\|y-\sum_{k=1}^r c_kx_{j_k}\|\leq \left(\sum_{k=1}^r|c_k|^2\right)^{\frac{1}{2}}s_{n,m}(r)\delta+\|(I_n-Q)y\|$.
\end{thm}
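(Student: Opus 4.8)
The plan is to reduce everything to a truncated singular value decomposition of $X$. First I would fix an economy-sized singular value decomposition $X=U\Sigma V$ with $\Sigma=\mathrm{diag}(s_1(X),\dots,s_{\min\{m,n\}}(X))$, $U^{*}U=I$, $VV^{*}=I$, and record that, since the singular values are arranged in nonincreasing order, the definition of $r=\mathrm{rk}_\delta(X)$ forces $s_j(X)>\delta$ for $j\le r$ and $s_j(X)\le\delta$ for $j>r$. Letting $U_r$ be the submatrix formed by the first $r$ columns of $U$, I would set $Q:=U_rU_r^{*}$, which is at once a rank $r$ orthogonal projector. A direct computation of $(I_n-Q)X$ using orthonormality of the columns of $U$ (equivalently, the Eckart--Young description of the truncated SVD) gives $\|X-QX\|_F^2=\sum_{j=r+1}^{\min\{m,n\}}s_j(X)^2\le(\min\{m,n\}-r)\delta^2$; since $s_{n,m}(r)/\sqrt r=\sqrt{\min\{m,n\}-r}$, this is exactly the first asserted inequality.

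The next step, which I regard as the crux, is to show $\mathrm{colsp}(QX)=\mathrm{colsp}(Q)$, so that $r$ of the \emph{original} vectors $x_j$ can serve as a spanning set of $\mathrm{colsp}(Q)$. Writing $QX=U_r\,\mathrm{diag}(s_1(X),\dots,s_r(X))\,V_r$ with $V_r$ the first $r$ rows of $V$, I note that the diagonal factor is invertible (its entries exceed $\delta>0$) and $V_r$ has orthonormal rows, so $QX$ has rank exactly $r$ and $\mathrm{colsp}(QX)=\mathrm{colsp}(U_r)=\mathrm{colsp}(Q)$. Hence I can extract $r$ linearly independent columns $Qx_{j_1},\dots,Qx_{j_r}$ of $QX$; they form a basis of $\mathrm{colsp}(Q)$ and fix the indices $j_1,\dots,j_r$ and the vectors $x_{j_1},\dots,x_{j_r}$ demanded by the statement.

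To conclude, I would use $Qy\in\mathrm{colsp}(Q)$ to write $Qy=\sum_{k=1}^r c_kQx_{j_k}=Q\big(\sum_{k=1}^r c_kx_{j_k}\big)$ for suitable $c_1,\dots,c_r\in\mathbb{C}$, put $z:=\sum_{k=1}^r c_kx_{j_k}$, and use the triangle inequality together with $Qy=Qz$ to get $\|y-z\|\le\|(I_n-Q)y\|+\|(I_n-Q)z\|$. For the last term I would observe that $\|(I_n-Q)x_{j_k}\|^2\le\sum_{j=1}^m\|(I_n-Q)x_j\|^2=\|X-QX\|_F^2\le(\min\{m,n\}-r)\delta^2$ for every $k$, whence $\|(I_n-Q)z\|\le\sum_{k=1}^r|c_k|\,\|(I_n-Q)x_{j_k}\|\le\sqrt{\min\{m,n\}-r}\,\delta\sum_{k=1}^r|c_k|\le s_{n,m}(r)\,\delta\big(\sum_{k=1}^r|c_k|^2\big)^{1/2}$, the final step using Cauchy--Schwarz ($\sum_k|c_k|\le\sqrt r\,(\sum_k|c_k|^2)^{1/2}$) and $\sqrt r\,\sqrt{\min\{m,n\}-r}=s_{n,m}(r)$. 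Combining the two estimates yields the second asserted inequality.

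I do not expect a serious obstacle. The two points that need care are (i) ensuring that genuine columns $x_{j_k}$ (rather than some abstract basis of $\mathrm{colsp}(Q)$) can be used, which is precisely what $\mathrm{colsp}(QX)=\mathrm{colsp}(Q)$ delivers, and (ii) bookkeeping of constants so that the deliberately crude bound $\sum_k|c_k|\le\sqrt r\,(\sum_k|c_k|^2)^{1/2}$ reproduces the factor $s_{n,m}(r)$ exactly as stated — a more careful argument through the Frobenius norm would in fact give the smaller factor $\sqrt{\min\{m,n\}-r}$, but the weaker constant is what the theorem claims.
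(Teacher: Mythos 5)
Your proposal is correct and follows essentially the same route as the paper: the rank-$r$ truncated-SVD projector $Q=U_rU_r^{*}$, the Frobenius bound $\|X-QX\|_F\leq\sqrt{\min\{m,n\}-r}\,\delta$, extraction of $r$ original columns whose projections span $\mathrm{colsp}(Q)$, and the triangle inequality plus Cauchy--Schwarz to get the stated constant $s_{n,m}(r)$. Your explicit factorization $QX=U_r\,\mathrm{diag}(s_1,\dots,s_r)\,V_r$ is in fact a cleaner justification that $\mathrm{rk}(QX)=r$ than the paper's appeal to Lemma~\ref{lem:rk-lem}, and your closing remark about the sharper constant is accurate but immaterial to the claim.
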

\begin{proof}
Let us consider an economy-sized singular value decomposition $USV=A$. If $u_j$ denotes the $j$-column of $U$, let $Q$ be the rank $r=\mathrm{rk}_\delta(A)$ orthogonal projector determined by the expression $Q=\sum_{j=1}^r u_ju_j^\ast$. It can be seen that
\begin{align*}
\|X-QX\|_F^2&=\sum_{j=r+1}^{\min\{m,n\}} s_j(X)^2\\
&\leq (\min\{m,n\}-r)\delta^2=\frac{s_{n,m}(r)^2}{r}\delta^2.
\end{align*}
Consequently, $\|X-QX\|_F\leq \frac{s_{n,m}(r)}{\sqrt{r}}\delta$.

Let us set.
\begin{align*}
\hat{X}&=
\begin{bmatrix}
| & | &  & |\\
\hat{x}_1 & \hat{x}_2 & \cdots & \hat{x}_{m}\\
| & | &  & |
\end{bmatrix}=QX\\
\hat{X}_y&=
\begin{bmatrix}
| & | &  & | & |\\
\hat{x}_1 & \hat{x}_2 & \cdots & \hat{x}_{m} & \hat{y}\\
| & | &  & | & |
\end{bmatrix}=Q\begin{bmatrix}
X & y\\
\end{bmatrix}
\end{align*}
Since by lemma \ref{lem:rk-lem} $\mathrm{rk}(X)\geq \mathrm{rk}_\delta(X)$, we will have that $\mathrm{rk}(\hat{X})=r=\mathrm{rk}_\delta(X)>0$, and since we also have that $\hat{x}_1,\ldots,\hat{x}_m,\hat{y}\in \mathrm{span}(\{u_1,\ldots,u_r\})$, there are $r$ linearly independent $\hat{x}_{j_1},\ldots,\hat{x}_{j_r}\in \{\hat{x}_1,\ldots,\hat{x}_m\}$ such that $\mathrm{span}(\{u_1,\ldots,u_r\})=\mathrm{span}(\{\hat{x}_{j_1},\ldots,\hat{x}_{j_r}\})$, this in turn implies that $\hat{y}\in \mathrm{span}(\{\hat{x}_{j_1},\ldots,\hat{x}_{j_r}\})$ and there are $c_1,\ldots,c_r\in \mathbb{C}$ such that $\hat{y}=\sum_{k=1}^r c_k\hat{x}_{j_k}$. It can be seen that for each $z\in \{x_1,\ldots,x_m\}$
\begin{align*}
\|z-Qz\|&\leq\|X-QX\|_F\leq \frac{s_{n,m}(r)}{\sqrt{r}}\delta,
\end{align*} 
and this in turn implies that
\begin{align*}
\left\|y-\sum_{k=1}^r c_kx_{j_k}\right\|&=\left\|y-\sum_{k=1}^r c_k x_{j_k}-\left(\hat{y}-\sum_{k=1}^r c_k \hat{x}_{j_k}\right)\right\|\\
&=\left\|y-\sum_{k=1}^r c_kx_{j_k}-Q\left(y-\sum_{k=1}^r c_k x_{j_k}\right)\right\|\\
&\leq \left(\sum_{k=1}^r|c_k|^2\right)^{\frac{1}{2}}s_{n,m}(r)\delta+\|(I_n-Q)y\|.
\end{align*}
This completes the proof.
\end{proof}

As a direct implication of theorem \ref{thm:main-SpLSSolver} one can obtain the following corollary.

\begin{cor}\label{cor:SpLSSolver}
Given $\delta>0$, $A\in \mathbb{C}^{m\times n}$ and $y\in \mathbb{C}^m$. If $\mathrm{rk}_\delta\left(A\right)>0$ and if we set $r=\mathrm{rk}_\delta\left(A\right)$ and $s_{n,m}(r)=\sqrt{r(\min\{m,n\}-r)}$ then, there are $x\in\mathbb{C}^n$ and a rank $r$ orthogonal projector $Q$ that does not depend on $y$, such that $\|Ax-y\|\leq \|x\|s_{n,m}(r)\delta+\|(I_m-Q)y\|$ and $x$ has at most $r$ nonzero entries.
\end{cor}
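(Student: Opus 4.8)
The idea is to derive this directly from Theorem \ref{thm:main-SpLSSolver} by taking the columns of $A$ as the vectors $x_1,\ldots,x_n$. First I would write $A = \begin{bmatrix} x_1 & x_2 & \cdots & x_n \end{bmatrix}$ with $x_j\in\mathbb{C}^m$ the $j$-th column of $A$, so that the ambient dimension in the theorem is $m$ (not $n$) and the number of vectors is $n$. Since $\mathrm{rk}_\delta(A)>0$, the hypothesis of Theorem \ref{thm:main-SpLSSolver} is met with the roles of $m$ and $n$ interchanged, so I would apply the theorem with $y\in\mathbb{C}^m$ the given target vector. This yields a rank-$r$ orthogonal projector $Q$ (built from the left singular vectors of $A$, hence independent of $y$), indices $j_1,\ldots,j_r$, and scalars $c_1,\ldots,c_r\in\mathbb{C}$ with $\bigl\|y-\sum_{k=1}^r c_k x_{j_k}\bigr\| \le \bigl(\sum_{k=1}^r|c_k|^2\bigr)^{1/2} s_{n,m}(r)\,\delta + \|(I_m-Q)y\|$.

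Next I would assemble the vector $x\in\mathbb{C}^n$ whose $j_k$-th entry is $c_k$ for $k=1,\ldots,r$ and whose remaining entries are zero. By construction $Ax = \sum_{k=1}^r c_k x_{j_k}$, and $x$ has at most $r$ nonzero entries. Moreover $\|x\|^2 = \sum_{k=1}^r |c_k|^2$, so $\bigl(\sum_{k=1}^r|c_k|^2\bigr)^{1/2} = \|x\|$, and substituting into the bound above gives exactly $\|Ax-y\| \le \|x\|\, s_{n,m}(r)\,\delta + \|(I_m-Q)y\|$. This is the claimed inequality, and since $Q$ depends only on the left singular subspace of $A$ it does not depend on $y$, completing the argument.

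I do not anticipate a genuine obstacle here; the only point requiring care is bookkeeping of the two index positions $m$ and $n$, since Theorem \ref{thm:main-SpLSSolver} is stated with $m$ vectors living in $\mathbb{C}^n$ while the corollary has a matrix in $\mathbb{C}^{m\times n}$, so the role of "$\min\{m,n\}$" and the definition of $s_{n,m}(r)$ must be tracked consistently (the symmetry $s_{n,m}(r)=s_{m,n}(r)$ makes this harmless). The identification $\bigl(\sum_k|c_k|^2\bigr)^{1/2}=\|x\|$ is the key bridge that converts the coefficient-norm factor in the theorem into the $\|x\|$ factor in the corollary, and the sparsity count $r$ is inherited verbatim from the number of selected columns.
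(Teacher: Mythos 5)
Your proposal is correct and follows essentially the same route as the paper's own proof: both apply Theorem \ref{thm:main-SpLSSolver} to the columns $a_j=A\hat{e}_{j,n}$ of $A$, obtain the projector $Q$, indices $j_1,\ldots,j_r$ and coefficients $c_1,\ldots,c_r$, and then assemble $x$ with entries $c_k$ in positions $j_k$ so that $Ax=\sum_{k=1}^r c_k a_{j_k}$ and $\|x\|=\bigl(\sum_{k=1}^r|c_k|^2\bigr)^{1/2}$. Your explicit remark that $Q$ is built from the left singular vectors of $A$ (hence independent of $y$) and your care with the $m$ versus $n$ bookkeeping are both consistent with, and slightly more explicit than, the paper's argument.
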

\begin{proof}
Let us set $x=\mathbf{0}_{n, 1}$ and $a_j=A\hat{e}_{j,n}$ for $j=1,\ldots,n$. Since $r=\mathrm{rk}_\delta\left(A\right)>0$ and $s_{n,m}(r)=\sqrt{r(\min\{m,n\}-r)}$, by theorem \ref{thm:main-SpLSSolver} we will have that there is a rank $r$ orthogonal projector $Q$ such that $\|A-QA\|_F\leq (s_{n,m}(r)/\sqrt{r})\delta$, and without loss of generality $r$ vectors $a_{j_1},\ldots,a_{j_r}\in \{a_1,\ldots,a_n\}$ and $r$ scalars $c_1\ldots,c_r\in \mathbb{C}$ with $j_1\leq j_2\leq \cdots\leq j_r$ (reordering the indices $j_k$ if necessary), such that $\|y-\sum_{k=1}^r c_ka_{j_k}\|\leq \left(\sum_{k=1}^r|c_k|^2\right)^{\frac{1}{2}}s_{n,m}(r)\delta+\|(I_m-Q)y\|$. If we set $x_{j_k}=c_{k}$ for $k=1,\ldots,r$, we will have that $\|x\|=\left(\sum_{k=1}^r|c_k|^2\right)^{\frac{1}{2}}$ and $Ax=\sum_{k=1}^r x_{j_k}a_{j_k}=\sum_{k=1}^r c_ka_{j_k}$. Consequently, $\|Ax-y\|\leq \|x\|s_{n,m}(r)\delta+\|(I_m-Q)y\|$. This completes the proof.
\end{proof}

The results and ideas presented in this section can be translated into a sparse linear least squares solver algorithm described by algorithm \ref{alg:main_SLMESolver_alg_1} in \S \ref{sec:algorithms}.

\subsection{Sparse structured nonlinear regressive model identification}
\label{sec:structured_NAR_models}

Given time series data sets $\Sigma_x=\{x_t\}_{t\geq 1}$ and $\Sigma_y=\{y_t\}_{t\geq 1}$ in ${R}^n$ corresponding to the orbits of two discrete-time dynamic financial systems of interest, let us consider the problem of identifying a map $\mathscr{T}$ relating the time series data according to the expression
\begin{equation}
y_{t} = \mathcal{T}(x_t) +r_t,
\label{eq:original_system_dynamics_diff_eq}
\end{equation}
where $r_t$ is some suitable small residual term defined as in \eqref{eq:dynamic_transformation_def}. One may need to preprocess the time series data before proceeding with the approximate representation of a suitable evolution operator. For this purpose, given some prescribed suitable integer $L>0$, one can consider the time series $\mathcal{D}_L(\Sigma_x)$ and $\mathcal{D}_L(\Sigma_y)$ determined by the expressions.
\begin{align*}
   \mathcal{D}_L(\Sigma_x)&=\{\mathbf{x}_L(t)\}_{t\geq L}\\ 
   \mathcal{D}_L(\Sigma_y)&=\{\mathbf{y}_L(t)\}_{t\geq L}
\end{align*}
For the dilated time series $\mathcal{D}_L(\Sigma_x)$ and $\mathcal{D}_L(\Sigma_y)$, the identification process corresponding to the relation \eqref{eq:original_system_dynamics_diff_eq}, can be translated into the approximate solution of equations of the form
\begin{align}
\mathbf{y}_L(t) = \tilde{\mathcal{T}}(\mathbf{x}_L(t)),
\label{eq:dilated_dynamics_diff_eq}
\end{align}
for $t\geq L$. Where $\tilde{\mathcal{T}}$ is the mapping to be approximately identified.

For any $p\geq 1$, let us consider the map $\eth_p:\mathbb{R}^n\to \mathbb{R}^{d_p(n)}$ for $d_p(n)=n(n^{p}-1)/(n-1)+1$, that is determined by the expression. 
\[
\eth_p(x):=
\begin{bmatrix}
\Pi_1(x)\\
\Pi_2(x)\\
\vdots\\
\Pi_p(x)\\
1
\end{bmatrix}=
\begin{bmatrix}
x^{\otimes 1}\\
x^{\otimes 2}\\
\vdots\\
x^{\otimes p}\\
1
\end{bmatrix}
\]

Given integers $p,L>0$, and two orbits $\Sigma_x=\{x_t\}_{t\geq 1}$ and $\Sigma_y=\{y_t\}_{t\geq 1}$ in  $\mathbb{R}^n$, corresponding to two related dynamic financial processes of interest. For finite samples $\Sigma^x_N = \{x_t\}_ {t=1}^T\subset \Sigma_x$ and $\Sigma^y_N = \{y_t\}_ {t=1}^T\subset \Sigma_y$, let us  consider the matrices:
\begin{align}
\mathbf{H}^{(0,p)}_{L}(\Sigma^x_T)&=\begin{bmatrix}
\eth_p(\mathbf{x}_L(L)) & \cdots & \eth_p(\mathbf{x}_L(T))
\end{bmatrix}\label{eq:structured_data_matrices}\\
\mathbf{H}^{(1)}_{L}(\Sigma^y_T)&=\begin{bmatrix}
\mathbf{y}_L(L) & \cdots & \mathbf{y}_L(T)
\end{bmatrix}\nonumber
\end{align}

The mapping identification mechanism used in this study for dilated systems of the form \eqref{eq:dilated_dynamics_diff_eq}, will be approximately described by the expression:
\begin{equation}
\mathbf{y}_L(t)=
\hat{\mathcal{T}}(\mathbf{x}_L(t))=W\eth_p(\mathbf{x}_L(t)), \:\: t\geq L,
\label{eq:evolution_op_id}
\end{equation}
for some matrix $W\in \mathbb{R}^{n\times d_p(n)}$ to be determined, with $d_p(n)=n(n^{p}-1)/(n-1)+1$. Applying the techniques and ideas previously presented in this section, the matrix $W$ in \eqref{eq:evolution_op_id} can be estimated by approximately solving the matrix equation
\begin{equation}
W\mathbf{H}^{(0,p)}_{L}(\Sigma^x_T) = \mathbf{H}^{(1)}_{L}(\Sigma^y_T).
\label{eq:evol_matrix_eq}
\end{equation}
The devices described by \eqref{eq:evolution_op_id} are called regressive reservoir computers (RRC) in this paper. 

For any given integers $L,n,p>0$. Taking advantage of the maps $\eth_p$, one can find an integer $0<r_p(n)<d_p(n)$ together with a sparse matrix $R_{p,L}(n)\in\mathbb{R}^{r_p(n)\times d_p(n)}$, such that $R_{p,L}(n)^+R_{p,L}(n)\eth_p(x)\approx \eth_p(x)$ for $x\in \mathbb{R}^{nL}$. The existence of the pair $r_p(n),R_{p,L}(n)$ is determined by the following theorem.

\begin{thm}\label{thm:Compression_existence}
Given positive integers $n,p$. There are an integer $0<\rho_p(n)<d_p(n)$ and a sparse matrix $R_{p}(n)\in \mathbb{R}^{\rho_p(n)\times d_p(n)}$ with $d_p(n)$ nonzero entries, such that $R_{p}(n)\: \tilde{\eth}_p(x)$ has the least number of non-redundant words (monomial terms) for any $x\in \mathbb{R}^{n}$.
\end{thm}
\begin{proof}
Let $n,p$ be positive integers. Consider the structured embedding map $\tilde{\eth}_p:\mathbb{R}^{n} \to \mathbb{R}^{d_p(n)}$, where $d_p(n) = n(n^p - 1)/(n - 1) + 1$ corresponds to the total number of distinct tensor monomials up to degree $p$, plus a constant term.

We aim to construct a sparse matrix $R_{p}(n) \in \mathbb{R}^{\rho_p(n) \times d_p(n)}$ that maps the embedding $\tilde{\eth}_p(x)$ to a stochastic vector of reduced dimension, while preserving all non-redundant monomial terms.

Let us start by defining the matrix $R\in \mathbb{R}^{1\times d_p(n)}$, with $1$ in its $R_{11}$ entry and with all other entries equal to zero.

For each $2\leq j\leq d_p(n)$, let us consider the indices $j=k_1(j)< k_2(j)<\cdots <k_{n_j}(j)<d_p(n)$ that correspond to the same monomial in $\tilde{\eth}_p(x)$, let us define the matrix $R_0\in \mathbb{R}^{1\times d_p(n)}$ with $1$ in its $R_{1k_l(j)}$ entries for $1\leq l\leq n_j$, and with all other entries equal to zero. Let us now define the augmented matrix 
$$
R:=\begin{bmatrix}
    R\\
    R_0
\end{bmatrix}
$$
Finally, update the matrix $R$, using the operation:
$$
R:=\begin{bmatrix}
    R\\
    R'
\end{bmatrix}
$$
where $R'\in \mathbb{R}^{1\times d_p(n)}$ is the matrix with entry $R'_{1d_p(n)}$ equal to $1$, and with all other entries equal to $0$. 

Let us set $R_p(n):=R$. It can be seen by the way $R$ has been constructed, that the operation $R_p(n)\tilde{\eth}_p(x)$ assigns each group of duplicates in $\tilde{\eth}_p(x)$ to a single representative coordinate of $R_p(n)\tilde{\eth}_p(x)$. This selection is performed by adding over the redundant entries and projecting onto a reduced subspace. Let us set $\rho_p(n)$ as the number of rows of $R$. Because of this, it is clear that $R_{p}(n) \tilde{\eth}_p(x)$ has the least number of non-redundant words. This completes the proof.
\end{proof}

In order to reduce to computational effort corresponding to the solution of \eqref{eq:evol_matrix_eq}, using the matrix $R_{p,L}(n)$ described by Theorem \ref{thm:Compression_existence}, one can obtain an approximate reduced representation of \eqref{eq:evol_matrix_eq} determined by the expression.
\begin{equation}
\bar{W} R_{p,L}(n)\mathbf{H}^{(0,p)}_{L}(\Sigma^x_T) = \mathbf{H}^{(1)}_{L}(\Sigma^y_T)
\label{eq:reduced_evol_matrix_eq}
\end{equation}

The architecture of the regressive reservoir computers considered in this study was inspired by next generation reservoir computers \cite{Gauthier2021}.

Schematically, the regressive models considered in this study can be described by a block diagram of the form,
\begin{equation}
\tikzstyle{sensor}=[draw, thick, text width=3em, 
    text centered, rounded corners, minimum height=2.5em, fill=blue!20, blur shadow={shadow blur steps=5}]
\tikzstyle{ann} = [above, text width=5em, text centered]
\tikzstyle{wa} = [sensor, text width=2em, 
    minimum height=14em, rounded corners, fill=green!30]
\tikzstyle{sc} = [sensor, text width=13em, fill=red!20, 
    minimum height=10em, rounded corners, drop shadow]
\def\blockdist{2.3}        
\begin{tikzpicture}
    \node (wa) [wa]  {$\mathbf{W}$};
    \path (wa.west)+(-3.0,1.75) node (asr1) [sensor, fill=orange!30] {$\boldsymbol{\Pi}_1$};
    \path (wa.west)+(-3.0,0.5) node (asr2)[sensor, fill=orange!50] {$\boldsymbol{\Pi}_2$};
    \path (wa.west)+(-3.0,-1.0) node (dots)[ann] {$\vdots$}; 
    \path (wa.west)+(-3.0,-1.75) node (asr3)[sensor, fill=orange!70] {$\boldsymbol{\Pi}_p$};       
    \path (wa.east)+(1.1,0) node (vote) {$\hat{\mathbf{y}}_L(t)$};
\node[draw, dashed, fit=(asr1) (asr2) (asr3), inner sep=8pt] (fit) {};
    
    \path [draw, ->, thick] (asr1.east) -- (wa.west|-asr1.east);
    \path [draw, ->, thick] (asr2.east) -- (wa.west|-asr2.east);
    \path [draw, ->, thick] (asr3.east) -- (wa.west|-asr3.east);
     \path (wa.west)+(-0.5,-1.0) node (dots) [ann] {$\vdots$};
    \path [draw, ->, thick] (wa.east) -- (vote.west);
    \path [draw, ->, thick] (asr1.west)+(-1,0) node [left] {$\mathbf{x}_L(t)$} -- (asr1.west);        
    \draw [->, thick] (asr1.west)+(-0.5,0) |- (asr2.west);
    \draw [->, thick] (asr2.west)+(-0.5,0) |- (asr3.west);
\end{tikzpicture}
\label{eq:Model_diagram}
\end{equation}
where for each $t\geq L$, the block $\mathbf{W}$ is determined by the expression
\begin{align*}
\mathbf{W}(\Pi_1(\mathbf{x}_L(t)),\ldots,\Pi_p(\mathbf{x}_L(t)))&:=\tilde{W}\begin{bmatrix}
\Pi_1(\mathbf{x}_L(t))\\
\vdots\\
\Pi_p(\mathbf{x}_L(t))
\end{bmatrix}
+c_W\\
&= \begin{bmatrix}
\tilde{W} & c_W
\end{bmatrix}\eth_p\left(\mathbf{x}_L(t)\right)
\end{align*}
and where the matrix $W=\begin{bmatrix}
\tilde{W} & c_W
\end{bmatrix}$ is determined by \eqref{eq:evol_matrix_eq}.

The structure of the generic block $\mathbf{W}$ ub \eqref{eq:Model_diagram} can be factored in the form
\begin{equation}
\tikzstyle{block} = [draw, fill=white, rectangle, 
    minimum height=3em, minimum width=6em]
\tikzstyle{sum} = [draw, fill=white, circle, node distance=1cm]
\tikzstyle{input} = [coordinate]
\tikzstyle{output} = [coordinate]
\tikzstyle{pinstyle} = [pin edge={to-,thin,black}]
\begin{tikzpicture}[auto, node distance=2.5cm,>=latex']

    \node [input, name=input] {};
    \node (u0) [coordinate, xshift = -3.7cm , yshift = 0cm] {};
    \node (u1) [coordinate, yshift = -0.3cm] {};
    
    \node [block, right of=input, node distance=2cm,rounded corners, fill=green!70,blur shadow={shadow blur steps=5}] (K) {$\hat{\mathbf{W}}$};
    \node [block, left of=input, node distance=1.2cm,rounded corners,fill=gray!60,blur shadow={shadow blur steps=5}] (L) {$\mathbf{R}$};

    \draw [->] (input) -- node[name=u] {$\mathbf{y}(t)$} (K);   
    \draw [->] (u0) -- node[name=u] {$\eth_p(\mathbf{x}_L(t))$} (L);

    \node [output, right of=K] (output) {};
    \draw [->] (K) -- node [name=y] {$\hat{\mathbf{y}}_L(t)$}(output); 
    \node [near end] {$ $} (input); 
\end{tikzpicture}
\label{eq:Adapted_coupling_block}
\end{equation}
The layers $\mathbf{R}$ and $\hat{\mathbf{W}}$ of the device \eqref{eq:Adapted_coupling_block} are determined by the expressions
\begin{align*}
\mathbf{R}(\mathbf{x})&=\hat{R}\mathbf{x},\\
\hat{\mathbf{W}}(\mathbf{y})&=W\mathbf{y}
\end{align*}
for any pair of suitable vectors $\mathbf{x},\mathbf{y}$. Where $W$ is a sparse representation of an approximate solution to \eqref{eq:reduced_evol_matrix_eq} and $\hat{R}$ is determined by Theorem \ref{thm:Compression_existence}.


Using the reservoir computer models described by \eqref{eq:evolution_op_id}, \eqref{eq:Model_diagram} and \eqref{eq:Adapted_coupling_block}, we can compute approximate representations of the mappings that satisfy \eqref{eq:dilated_dynamics_diff_eq} using the expression
\begin{align}
\hat{\mathcal{T}}(\mathbf{x}_L(t))&:=\hat{K} \left(\hat{\mathbf{W}}\circ \mathbf{R}\circ \eth_p (\mathbf{x}_L(t)\right)\nonumber\\ &=\hat{K}W\hat{R}\eth_p\left(\mathbf{x}_L(t))\right),
\label{eq:AR_part_Matrix_form}
\end{align}
for each $t\geq L$, with
\[
\hat{K}=\begin{bmatrix}
\hat{e}_{1,nL}^\top\\
\hat{e}_{L+1,nL}^\top\\
\vdots\\
\hat{e}_{(n-1)L+1,nL}^\top
\end{bmatrix}.
\] 
Furthermore, we can use the identified RRC model $\hat{\mathcal{T}}$ to simulate the behavior $y_{t}=\mathcal{T}(x_t)$ of the system described by \eqref{eq:original_system_dynamics_diff_eq} for $L\leq t\leq \tau$, by performing the operation:
\begin{equation}
\mathbf{T}\left(\mathbf{x}_L(t)\right):= \hat{K}\hat{\mathcal{T}}(\mathbf{x}_L(t))=\hat{K}W\eth_p\left(\mathbf{x}_L(t))\right),
\label{eq:AR_part_Matrix_form-2}
\end{equation}
for some suitable $\tau>0$.

\begin{thm}\label{thm:thm-1}
Given $\delta>0$, two integers $p,L>0$, a sample $\Sigma_{T}=\{x_t\}_{t=1}^T$ from a dynamic financial system's orbit $\Sigma=\{x_t\}_{t\geq 1}\subset \mathbb{R}^{n}$ with $T>L$, and a matrix solvent $\bar{W}\in \mathbb{R}^{nL\times r_p(nL)}$ of \eqref{eq:reduced_evol_matrix_eq} with $R_{p,L}(n)$ and $r_p(n)$ determined by Theorem \ref{thm:Compression_existence}. If $r=\mathrm{rk}_\delta(R_{p,L}(n)\mathbf{H}^{(0,p)}_{L}(\Sigma_T))>0$, then there is a sparse representation $\hat{W}\in \mathbb{R}^{nL\times \rho_p(nL)}$ of $\bar{W}$ with at most $r \rho_p(nL)$ nonzero entries such that 
\begin{equation}
\|\hat{W}R_{p,L}(n)\mathbf{H}^{(0,p)}_{L}(\Sigma_T) - \bar{W}R_{p,L}(n)\mathbf{H}^{(0,p)}_{L}(\Sigma_T)\|_F\leq K\delta,
\label{eq:thm1-first-estimate}
\end{equation}
for $K = \sqrt{nL(\min\{\rho_p(nL),T-L\}-r)}(\sqrt{r}\|\hat{W}\|_F+\|\bar{W}\|_F)$, where $\rho_p(nL)$ is the integer described by Theorem \ref{thm:Compression_existence}.
\end{thm}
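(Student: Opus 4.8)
\medskip

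The plan is to decouple the estimate \eqref{eq:thm1-first-estimate} into $nL$ independent sparse vector least-squares problems, one per row of $\bar W$, and then to invoke Corollary~\ref{cor:SpLSSolver} with a single orthogonal projector shared by all of those rows. I would set $A:=R_{p,L}(n)\,\mathbf{H}^{(0,p)}_{L}(\Sigma_T)\in\mathbb{R}^{\rho_p(nL)\times(T-L+1)}$, so that \eqref{eq:reduced_evol_matrix_eq} becomes the approximate identity $\bar W A\approx \mathbf{H}^{(1)}_{L}(\Sigma_T)$. Since $r=\mathrm{rk}_\delta(A)>0$ by hypothesis, Lemma~\ref{lem:rk-delta-inv} gives $\mathrm{rk}_\delta(A^\top)=r>0$, so Corollary~\ref{cor:SpLSSolver} applies to the matrix $A^\top\in\mathbb{R}^{(T-L+1)\times\rho_p(nL)}$.

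For $1\le i\le nL$, let $\bar w_i\in\mathbb{R}^{\rho_p(nL)}$ be the transpose of the $i$-th row of $\bar W$ and put $y_i:=A^\top\bar w_i$, that is, the transpose of the $i$-th row of $\bar W A$. Applying Corollary~\ref{cor:SpLSSolver} to $A^\top$ and $y_i$ produces a vector $\hat w_i\in\mathbb{R}^{\rho_p(nL)}$ with at most $r$ nonzero entries together with a rank-$r$ orthogonal projector $Q$ --- \emph{the same $Q$ for every $i$}, because by that corollary $Q$ depends only on $A^\top$ and not on $y_i$ --- such that
\[
\bigl\|A^\top\hat w_i-y_i\bigr\|\le \|\hat w_i\|\,\sqrt{r(\min\{\rho_p(nL),T-L\}-r)}\,\delta+\bigl\|(I-Q)y_i\bigr\|.
\]
I would then take $\hat W\in\mathbb{R}^{nL\times\rho_p(nL)}$ to be the matrix whose $i$-th row is $\hat w_i^\top$; each of its $nL$ rows then has at most $r$ nonzero entries, and since $R_{p,L}(n)$ retains at least the $nL$ distinct degree-one coordinates we have $nL\le\rho_p(nL)$, so $\hat W$ has at most $r\,nL\le r\,\rho_p(nL)$ nonzero entries.

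It remains to control both terms uniformly in $i$ and to recombine them in the Frobenius norm. For the residual term I would use that, in the economy-sized singular value decomposition $A^\top=USV$, the projector furnished by Corollary~\ref{cor:SpLSSolver} is $Q=\sum_{j=1}^{r}u_ju_j^\top$, so $(I-Q)A^\top$ retains exactly the singular values $s_{r+1}(A^\top),s_{r+2}(A^\top),\dots$, each of which is at most $\delta$ because $r=\mathrm{rk}_\delta(A^\top)$; hence $\bigl\|(I-Q)A^\top\bigr\|_F\le\sqrt{\min\{\rho_p(nL),T-L\}-r}\,\delta$ and therefore $\bigl\|(I-Q)y_i\bigr\|\le\bigl\|(I-Q)A^\top\bigr\|_F\|\bar w_i\|\le\sqrt{\min\{\rho_p(nL),T-L\}-r}\,\delta\,\|\bar w_i\|$. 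Combining this with the displayed inequality and the crude row bounds $\|\hat w_i\|\le\|\hat W\|_F$ and $\|\bar w_i\|\le\|\bar W\|_F$ gives, for every $i$,
\[
\bigl\|A^\top\hat w_i-A^\top\bar w_i\bigr\|\le\sqrt{\min\{\rho_p(nL),T-L\}-r}\,\delta\,(\sqrt{r}\,\|\hat W\|_F+\|\bar W\|_F).
\]
Since $\|\hat W A-\bar W A\|_F^2=\sum_{i=1}^{nL}\|A^\top\hat w_i-A^\top\bar w_i\|^2$, summing these $nL$ identical bounds and taking square roots yields \eqref{eq:thm1-first-estimate} with precisely the stated constant $K=\sqrt{nL(\min\{\rho_p(nL),T-L\}-r)}(\sqrt{r}\,\|\hat W\|_F+\|\bar W\|_F)$.

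I expect the main obstacle to be not any single estimate but the bookkeeping that forces the orthogonal projector $Q$ to be common to all $nL$ rows --- this is exactly the content of the ``does not depend on $y$'' clause of Corollary~\ref{cor:SpLSSolver} --- together with the observation that although each $y_i$ lies in $\mathrm{colsp}(A^\top)$, it need not lie in the span of the top $r$ left singular vectors, so the term $\|(I-Q)y_i\|$ cannot simply be dropped and must instead be absorbed through the tail-singular-value bound $s_{r+1}(A^\top)\le\delta$. The finite group $G$ in the hypotheses does not enter this argument; it is relevant only to the construction of $R_{p,L}(n)$ in Theorem~\ref{thm:Compression_existence}.
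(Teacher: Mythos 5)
Your proposal is correct and follows essentially the same route as the paper's own proof: transpose \eqref{eq:reduced_evol_matrix_eq} into $nL$ sparse least-squares problems, apply Corollary~\ref{cor:SpLSSolver} with the single projector $Q$ shared across all rows, absorb the residual terms $\|(I-Q)y_i\|$ via the tail-singular-value bound $\|(I-Q)A^\top\|_F\le\sqrt{\min\{\rho_p(nL),T-L\}-r}\,\delta$ from Theorem~\ref{thm:main-SpLSSolver}, and recombine in the Frobenius norm. If anything, your write-up is slightly more explicit than the paper's (e.g., in justifying $nL\le\rho_p(nL)$ for the nonzero-entry count and in noting that $G$ plays no role), but the decomposition and the key lemmas invoked are identical.
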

\begin{proof}
Let us set $H = R_{p,L}(n)\mathbf{H}^{(0,p)}_{L,G}(\Sigma_T)^\top$ and $Y=H\bar{W}^\top$. It suffices to prove that there is a sparse representation $\hat{W}\in \mathbb{R}^{nL\times \rho_p(nL)}$ with at most $r \rho_p(nL)$ nonzero entries such that 
$$\|H\hat{W}^\top - Y\|_F\leq K\delta.$$
Since we have that
\begin{align*}
\mathrm{rk}_\delta(H)&=\mathrm{rk}_\delta\left(\left(R_{p,L}(n)\mathbf{H}^{(0,p)}_{L,G}(\Sigma_T)\right)^\top\right)\\
&=\mathrm{rk}_\delta(R_{p,L}(n)\mathbf{H}^{(0,p)}_{L,G}(\Sigma_T))>0    
\end{align*}
by Lemma \ref{lem:rk-delta-inv}. By Corollary \ref{cor:SpLSSolver}, if we set $r=\mathrm{rk}_\delta\left(H\right)$ and $\alpha=\sqrt{r(\min\{\rho_p(nL),T-L\}-r)}$. We will have that there is a rank $r$ orthogonal projector $Q$ such that for each $j=1,\ldots,nL$, there is $\hat{v}_j\in\mathbb{R}^{nL}$ with at most $r$ nonzero entries, for which $\|H\hat{v}_j-Y\hat{e}_{j,M}\|\leq \alpha\|\hat{v}_j\|\delta+\|(I_{T-L}-Q)Y\hat{e}_{j,nL}\|$. Consequently, if we set
\begin{align*}
\hat{W}=\begin{bmatrix}
| & & |\\
\hat{v}_1 & \cdots & \hat{v}_{nL}\\
| & & |
\end{bmatrix}^\top
\end{align*}
we will have that $\hat{W}$ has at most $nrL$ nonzero entries and
\begin{align*}
\|H\hat{W}^\top-Y\|_F^2&= \sum_{j=1}^{nL} \|H\hat{v}_j-Y\hat{e}_{j,nL}\|^2\\
&\leq M(\alpha\|\hat{W}\|_F \delta+\|(I_{T-L}-Q)Y\|_F)^2,
\end{align*}
and this in turn implies that,
\begin{align}
\|H\hat{W}^\top-Y\|_F&\leq \sqrt{nL}(\alpha\|\hat{W}\|_F \delta+\|(I_{T-L}-Q)H\|_F\|\bar{W}\|_F).
\label{eq:main_ineq}
\end{align}
By \eqref{eq:main_ineq} and by Theorem \ref{thm:main-SpLSSolver} we will have that
\begin{align*}
\|H\hat{W}^\top-Y\|_F&\leq \sqrt{nL}(\alpha\|\hat{W}\|_F \delta+(\alpha/\sqrt{r})\|\bar{W}\|_F\delta)\\
&=\alpha\sqrt{(nL/r)}(\sqrt{r}\|\hat{A}\|_F+\|A\|_F)\delta=K\delta.
\end{align*}
This completes the proof.
\end{proof}

\subsubsection{Sparse structured nonlinear autoregressive model identification}

Given some time series data $\Sigma\subset \mathbb{R}^n$ corresponding to an orbit determined by the difference equation 
\begin{equation}
x_{t+1} = \mathcal{A}(x_t),
\label{eq:original_system_dynamics_diff_eq}
\end{equation}
for some discrete-time dynamic financial model $(\hat{\boldsymbol{\Sigma}},\mathcal{T})$ to be identified. One can use the methods presented in \S\ref{sec:structured_NAR_models} to identify the mapping $\mathscr{S}$, by considering the RRC model identification determined by the problem
\begin{align*}
    y_t = \mathscr{A}(x_t),
\end{align*}
for the time series $\Sigma_x:=\{x_t\}_{t\geq 1}$ and $\Sigma_y:=\{y_t\}_{t\geq 1}$ in $\mathbb{R}^{n}$, with $y_t:=x_{t+1}$ for each $t\geq 1$.

\section{Algorithms}
\label{sec:algorithms}

The sparse model identification methods presented in \S\ref{section:linear-solvers} can be translated into prototypical algorithms that will be presented in this section, some programs for data reading and writing, synthetic signals generation, and predictive simulation are also included as part of the {\bf DyNet} tool-set available in \cite{FVides_DyNet}.

\subsection{Sparse linear least squares solver and structured assembling matrix identification algorithms}

As an application of the results and ideas presented in \S\ref{section:linear-solvers} one can obtain a prototypical sparse linear least squares solver algorithm like algorithm \ref{alg:main_SLMESolver_alg_1}.

\begin{figure}[!tph]
\begin{algorithm}[H]
\begin{flushleft}
\caption{{\bf SLRSolver}: Sparse linear least squares solver algorithm}
\label{alg:main_SLMESolver_alg_1}
\end{flushleft}
\begin{algorithmic}
\STATE{{\bf Data:}\:\:\: $A\in \mathbb{C}^{m\times n}$, $Y\in \mathbb{C}^{m\times p}$, $\delta>0$, $N\in \mathbb{Z}^+$, $\varepsilon>0$}
\STATE{{\bf Result:}\:\:\: $X=\mathbf{SLRSolver}(A,Y,\delta,N,\varepsilon)$}
\begin{enumerate}
\STATE{Compute economy-sized SVD $USV=A$\;}
\STATE{Set $s=\min\{m,n\}$\;}
\STATE{Set $r=\mathrm{rk}_\delta(A)$\;}
\STATE{Set $U_\delta=\sum_{j=1}^r U\hat{e}_{j,s}\hat{e}_{j,s}^\ast$\;}
\STATE{Set $T_\delta=\sum_{j=1}^r (\hat{e}_{j,s}^\ast S\hat{e}_{j,s})^{-1} \hat{e}_{j,s} \hat{e}_{j,s}^\ast$\;}
\STATE{Set $V_\delta=\sum_{j=1}^r \hat{e}_{j,s}\hat{e}_{j,s}^\ast V$\;}
\STATE{Set $\hat{A}=U_\delta^\ast A$\;}
\STATE{Set $\hat{Y}=U_\delta^\ast Y$\;}
\STATE{Set $X_0=V_\delta^\ast T_\delta \hat{Y}$\;}
\FOR{$j=1,\ldots,p$}
\STATE{Set $K=1$\;} 
\STATE{Set $\mathrm{error}=1+\delta$\;}
\STATE{Set $c=X_0\hat{e}_{j,p}$}
\STATE{Set $x_0=c$}
\STATE{Set $\hat{c}=\begin{bmatrix}
\hat{c}_1 & \cdots & \hat{c}_n
\end{bmatrix}^\top=\begin{bmatrix}
|\hat{e}_{1,n}^\ast c| & \cdots & |\hat{e}_{n,n}^\ast c|
\end{bmatrix}^\top$\;}
\STATE{Compute permutation $\sigma:\{1,\ldots,n\}\to \{1,\ldots,n\}$ such that: $\hat{c}_{\sigma(1)}\geq \hat{c}_{\sigma(2)}\geq \cdots \geq \hat{c}_{\sigma(n)}$}
\STATE{Set $N_0=\max\left\{\sum_{j=1}^n H_\varepsilon\left(\hat{c}_{\sigma(j)}\right),1\right\}$\;}
\WHILE{$K\leq N$ \AND $\mathrm{error}>\delta$} 
\STATE{Set $x=\mathbf{0}_{n,1}$\;}
\STATE{Set $A_0=\sum_{j=1}^{N_0} \hat{A}\hat{e}_{\sigma(j),n}\hat{e}_{j,N_0}^\ast$}
\STATE{Solve $c=\arg\min_{\tilde{c}\in \mathbb{C}^{N_0}}\|A_0\tilde{c}-\hat{Y}\hat{e}_{j,p}\|$\;}
\STATE{
\FOR{$k=1,\ldots,N_0$} 
\STATE{Set $x_{\sigma(k)}=\hat{e}_{k,N_0}^\ast c$\;}
\ENDFOR \;}
\STATE{Set $\mathrm{error}=\|x-x_0\|_\infty$\;}
\STATE{Set $x_0=x$\;}
\STATE{Set $\hat{c}=\begin{bmatrix}
\hat{c}_1 & \cdots & \hat{c}_n
\end{bmatrix}^\top=\begin{bmatrix}
|\hat{e}_{1,n}^\ast x| & \cdots & |\hat{e}_{n,n}^\ast x|
\end{bmatrix}^\top$\;}
\STATE{Compute permutation $\sigma:\{1,\ldots,n\}\to \{1,\ldots,n\}$ such that: $\hat{c}_{\sigma(1)}\geq \hat{c}_{\sigma(2)}\geq \cdots \geq \hat{c}_{\sigma(n)}$}
\STATE{Set $N_0=\max\left\{\sum_{j=1}^n H_\varepsilon\left(\hat{c}_{\sigma(j)}\right),1\right\}$}
\STATE{Set $K=K+1$\;} 
\ENDWHILE
\STATE{Set $x_j=x$\;}
\ENDFOR
\STATE{Set $X=\begin{bmatrix}
| & | &  & |\\
x_1 & x_2 & \cdots & x_p\\
| & | &  & |
\end{bmatrix}$\;}
\end{enumerate}
\RETURN $X$
\end{algorithmic}
\end{algorithm}
\end{figure}

The least squares problems $c=\arg\min_{\hat{c}\in \mathbb{C}^K}\|\hat{A}\hat{c}-y\|$ to be solved as part of the process corresponding to algorithm \ref{alg:main_SLMESolver_alg_1} can be solved with any efficient least squares solver available in the language or program where the sparse linear least squares solver algorithm is implemented. For the Python version of algorithm \ref{alg:main_SLMESolver_alg_1} the function {\tt lstsq} is implemented.

In this section, we focus on the applications of the structured matrix approximation methods presented in S\ref{sec:structured-approximation}, to  dynamical financial systems identification via regressive reservoir computers.

\begin{figure}[!ht]
\begin{algorithm}[H]
\begin{flushleft}

\caption{Compression matrix computation algorithm}
\label{alg:comp_matrix}
\end{flushleft}
\begin{algorithmic}
\STATE{{\bf Data:}\:\:\: $n,p,L\in \mathbb{Z}^+$, $\nu,\varepsilon\in \mathbb{R}^+$.\;}
\STATE{{\bf Result:}\:\:\: {\sc Compression matrix factor: } $R_{p,L}(n)$\;}
\begin{enumerate}
\STATE{Choose $nL$ pseudorandom numbers $\hat{x}_1,\ldots,\hat{x}_{nL}\in \mathbb{R}$ from $N(0,1)$\;}
\STATE{Set $\mathbf{y}=\nu \begin{bmatrix}
\hat{x}_1 & \hat{x}_2 & \cdots & \hat{x}_{nL}
\end{bmatrix}^\top$\;}
\STATE{Set $d= d_p(n)$\;}
\STATE{Set $\tilde{\mathbf{x}}=\begin{bmatrix}
\tilde{x}_1 & \cdots & \tilde{x}_d
\end{bmatrix}^\top := \eth_p\left(\mathbf{y}\right)$\;}
\STATE{Choose a pseudorandom number $\alpha \in N(0,1)$;}
\STATE{Set $\tilde{x}_d:=\alpha$\;}
\STATE{Set $R = e_{1,d}^\top$\;}
\FOR{$j=2,\ldots,d$}
\STATE{Find $1\leq k_1,\ldots,k_{n_j}\leq d$ such that $|\tilde{x}_j-\tilde{x}_{k_{m}}|\leq \varepsilon$, for each $1\leq m \leq n_j$\;}
\IF{$k_1 = j$}

\STATE{Set $R_0 := (1/n_j)\sum_{l=1}^{n_j} \hat{e}_{k_l,d}^T$\;}
\STATE{Set $R := \begin{bmatrix}
R \\
R_0
\end{bmatrix}$\;}

\ENDIF
\ENDFOR

\STATE{Set $R_{p,L}(n):=R$\;}
\end{enumerate}
\RETURN $R_{p,L}(n)$
\end{algorithmic}
\end{algorithm}
\end{figure}

\subsection{Structured coupling matrix identification algorithm}

Given a discrete-time dynamic financial model $(\Sigma,\mathscr{T})$ and a structured data sample $\Sigma_T\subset \Sigma$, we can apply Algorithm \ref{alg:comp_matrix} and Algorithm \ref{alg:main_SLMESolver_alg_1}, in order to compute the output coupling matrix that can be used to obtain an approximate representation of the evolution operator $\mathcal{T}$, corresponding to the orbit $\Sigma$. For this purpose, one can use the following Algorithm.

\begin{figure}[!tph]
\begin{algorithm}[H]
\begin{flushleft}
\caption{{\bf RRC Model}: RRC model identification}
\label{alg:main_ERRC_alg_1}
\end{flushleft}
\begin{algorithmic}
\STATE{{\bf Data:}\:\:\:$\Sigma^x_{N}=\{x_{t}\}_{t=1}^{T},\Sigma^y_{N}=\{y_{t}\}_{t=1}^{T}\subset \mathbb{R}^n$\;}
\STATE{{\bf Result:}\:\:\: {\sc Output coupling and compression matrices: $\hat{W},\tilde{W},R_{p,L}(n)$\;}}
\begin{enumerate}
\STATE{Choose or estimate the lag value $L$ using auto-correlation function based methods\;}
\STATE{Set a tensor order value $p$\;}
\STATE{Compute compression matrix $R_{p,L}(n)$ applying Algorithm \ref{alg:comp_matrix}\;}
\STATE{Compute matrices:
\begin{align*}
\mathbf{H}_0 &:= \mathbf{H}^{(0,p)}_{L}(\Sigma^x_T)\\
\mathbf{H}_1 &:= \mathbf{H}^{(1)}_{L}(\Sigma^y_T)
\end{align*}\;}
\STATE{Approximately solve: 
\[
\hat{W} \left(R_{p,L}(n)\mathbf{H}_0\right) = \mathbf{H}_1
\]
for $\hat{W}$ applying Algorithm \ref{alg:main_SLMESolver_alg_1}\;}
\end{enumerate}
\RETURN $\hat{W},R_{p,L}(n)$
\end{algorithmic}
\end{algorithm}
\end{figure}

\section{Numerical Simulations and Applications}
\label{sec:experiments}

In this section, we will present some numerical simulations computed using the {\bf DyNet} toolset available in \cite{FVides_DyNet}, which was developed as part of this project. The toolset consists of a collection of Python 3.10.4 programs for structured sparse identification and numerical simulation of discrete-time dynamical financial systems. 

The numerical experiments documented in this section were performed with {\rm Python} 3.10.4. All the programs written for real-world data reading, synthetic data generation, and sparse model identification as part of this project are available at \cite{FVides_DyNet}.

The numerical simulations described in this section were conducted on an Ubuntu 18.04.6 LTS server system. This system operates on a virtual machine within Hyper-V, equipped with 16 vCores of an Intel(R) Xeon(R) Gold 6238 CPU, running at 2.10 GHz (2095 MHz), and with 64 GB of RAM.

\subsection{Sparse autoregressive reservoir computers for dynamical nonlinear financial system behavior identification}
In this section, we focus on conducting numerical simulations to examine the behavior of a financial system modeled by a nonlinear dynamical system. These simulations aim to explore the intricate relationships among the interest rate (IR), investment demand (ID), and price index (PI) under two distinct scenarios. The governing equations of the model are as follows:
\begin{align}
&\dot{x}_1 = x_3+(x_2-s)x_1,\nonumber\\
&\dot{x}_2=1-cx_2-x_1^2,\nonumber\\
&\dot{x}_3=-x_1-ex_3,\nonumber\\
&x_1(0)=x_0,x_2(0)=y_0,x_3(0)=z_0.
\label{eq:NLFinancialModel}
\end{align}
Here, $x_1$, $x_2$, and $x_3$ denote the interest rate, investment demand, and price index, respectively.

As observed in \cite{RC_NLFinancialSystems}, systems of the form \eqref{eq:NLFinancialModel} can exhibit, among others behavior types, chaotic and eventually approximately periodic dynamic behavior depending on the configuration of parameters and initial conditions considered for \eqref{eq:NLFinancialModel}.

\subsubsection{Chaotic behavior identification}
For $s=3,c=0.1,e=1$, let us consider the initial conditions $x_0 = 2$,$ y_0=3$,$z_0=2$. For this configuration, one can obtain synthetic time series data $\Sigma_{12000}\subset \mathbb{R}^3$ obtained by applying a fourth-order adaptive numerical integration method to \eqref{eq:NLFinancialModel} for the configuration determined by the previous choice of parameters, obtaining an orbit's samples set $\Sigma_{12000}$ whose elements are uniformly distributed with respect to the time interval [0,120].

The training orbit's data set corresponding to the first $50\%$ of the data in $\Sigma_{12000}$, together with the remaining data used for model validation, are illustrated in Figure \ref{fig:NLFSignals1}.
\begin{figure}[!ht]
\centering
\includegraphics[scale=.3]{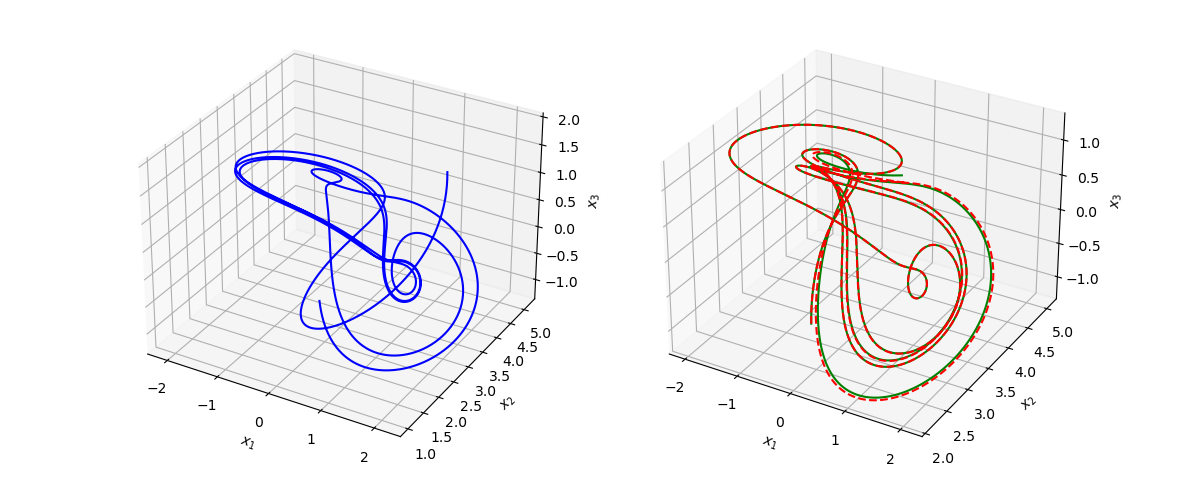}
\caption{Training orbits data (left), validation orbits data (right). The green line corresponds to validation data, and the red dotted line corresponds to the model's predictions.}
\label{fig:NLFSignals1}
\end{figure}
The factorization for the output coupling matrix $W=\hat{W}R$ determined by Theorems \ref{thm:Compression_existence} and \ref{thm:thm-1} are illustrated in Figure \ref{fig:NLFMatrices1}.
\begin{figure}[!ht]
\centering
\includegraphics[scale=.5]{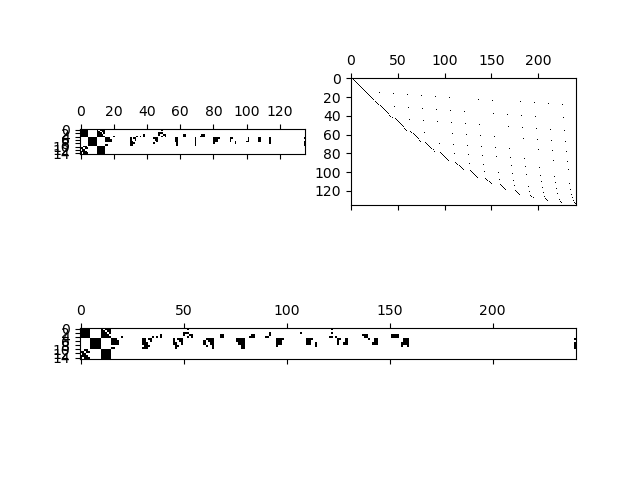}
\caption{Matrix factors $\hat{W}$ (top-left) and $R$ (top-right), output coupling matrix $W =\hat{W}R$ 
 (bottom).}
\label{fig:NLFMatrices1}
\end{figure}

\subsubsection{Eventually approximately periodic behavior identification}
For $s=0.5,c=0.1,e=0.1$, let us consider the initial conditions $x_0 = 1$,$ y_0=1$,$z_0=1$. For this configuration, one can obtain synthetic time series data $\Sigma_{12000}\subset \mathbb{R}^3$ obtained by applying a fourth-order adaptive numerical integration method to \eqref{eq:NLFinancialModel} for the configuration determined by the previous choice of parameters, obtaining an orbit's samples set $\Sigma_{12000}$ whose elements are uniformly distributed with respect to the time interval [0,120].

The training orbit's data set corresponding to the first $6.67\%$ of the data in $\Sigma_{12000}$, together with the remaining data used for model validation, are illustrated in Figure \ref{fig:NLFSignals2}.
\begin{figure}[!ht]
\centering
\includegraphics[scale=.3]{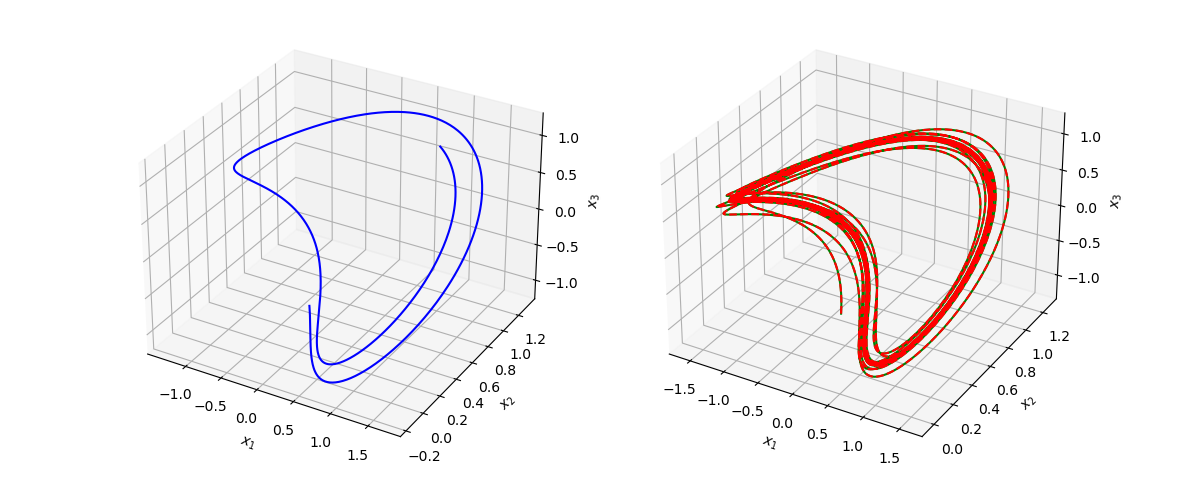}
\caption{Training orbits data (left), validation orbits data (right). The green line corresponds to validation data, and the red dotted line corresponds to the model's predictions.}
\label{fig:NLFSignals2}
\end{figure}
The factorization for the output coupling matrix $W=\hat{W}R$ determined by Theorems \ref{thm:Compression_existence} and \ref{thm:thm-1} are illustrated in Figure \ref{fig:NLFMatrices2}.
\begin{figure}[!h]
\centering
\includegraphics[scale=.5]{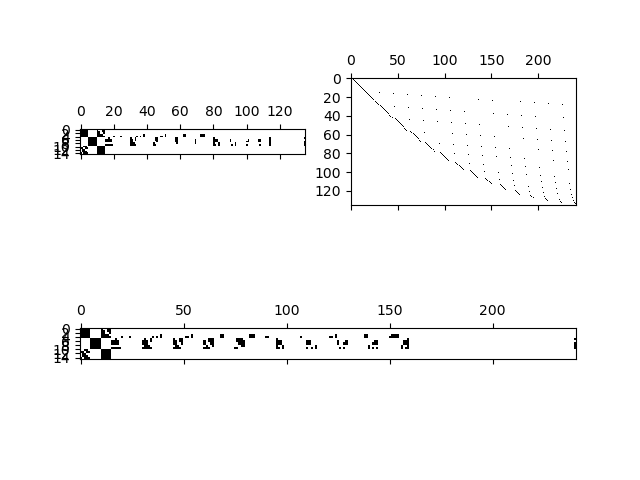}
\caption{Matrix factors $\hat{W}$ (top-left) and $R$ (top-right), output coupling matrix $W =\hat{W}R$
 (bottom).}
\label{fig:NLFMatrices2}
\end{figure}

The computational setting used for the experiments performed in this section is documented in the Python 3.10.4 program {\tt FDSExperiment.py} in \cite{FVides_DyNet} that can be used to replicate these results.

\subsubsection{Learning interest rates with sparse regressive reservoir computers.} \label{sec:linteresrate}
In this section, for financial systems described by \eqref{eq:NLFinancialModel} we will consider the problem corresponding to the identification and simulation of the interest rate signals $x_1$, when the signals $x_2,x_3$ are known.

The models considered in this section are determined by \eqref{eq:evolution_op_id} and can be described by expressions of the form:
\begin{equation}
\begin{bmatrix}
\hat{x}_1(t-1)\\
\hat{x}_1(t)
\end{bmatrix}
:=\hat{W}R_{2,2}(2)\eth_2\left(\begin{bmatrix}
        x_2(t-1)\\
        x_2(t)\\
        x_3(t-1)\\
        x_3(t)
    \end{bmatrix}\right)
     \label{eq:Reg_NLFSystem_model}
\end{equation}

When a financial system described by \eqref{eq:NLFinancialModel} exhibits an eventually periodic behavior, one can use models of the form \eqref{eq:Reg_NLFSystem_model} to learn the behavior of the interest rates, the related signals and model parameters are illustrated in Figure \ref{fig:periodic-interest-rate-id}.
\begin{figure}[!ht]
    \centering
    \includegraphics[scale=.5]{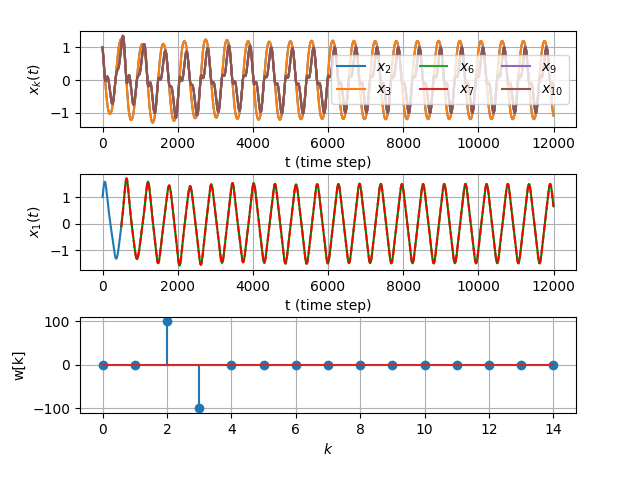}
    \caption{Periodic interest rate identification.}
    \label{fig:periodic-interest-rate-id}
\end{figure}

The Python 3.10.4 programs {\tt ARNLFinancialSystemID} and {\tt RNLFinancialSystemID} in \cite{FVides_DyNet} contain the computational settings that can be used to replicate these results.

\section{Conclusions}

The results in \S\ref{section:linear-solvers} and \S\ref{sec:structured_NAR_models} in the form of algorithms like the ones described in \S\ref{sec:algorithms}, can be effectively used for the sparse structured identification of financial dynamical models that can be used to compute data-driven predictive and prescriptive numerical simulations.

The sparse representations crucial for identifying the models' parameter matrices make SRRC models exceptionally adept at working with time series that have a relatively low volume of available training data. This attribute is particularly valuable in the financial sector, where data scarcity can often be a challenge.

From the perspective of regulatory bodies, such as the National Commission of Banks and Insurance Companies of Honduras, the inherent nature of reservoir computing in these models is a significant advantage. SRRC models are tailored to capture complex and nonlinear interactions between financial variables, offering deep insights into the interdependencies and influences within the banking system. This capability is crucial for regulatory oversight, as it aids in understanding the subtleties of market behavior and risk factors. The models provide a robust analytical tool for monitoring, regulation, and policy-making, ensuring that regulatory bodies are equipped with accurate and comprehensive analyses to oversee and guide the banking sector effectively.

\section{Future Directions}

The extension of sparse RRC modeling techniques to equivariant system identification will be studied in future communications. Further implementations of the structured sparse model identification algorithms presented in this document to compute data-driven dynamic general equilibrium models will be the subject of future communications.

\section*{Data Availability}

The programs and data that support the findings of this study will be openly available in the DyNet-CNBS repository, reference number \cite{FVides_DyNet}, in due time.\\

\section*{Conflicts of Interest}
The authors declare that they have no conflicts of interest.

\section*{Acknowledgment}

The structure preserving matrix computations needed to implement the algorithms in \S\ref{sec:algorithms}, were performed with  {\rm Python} 3.10.4, with the support and computational resources of the National Commission of Banks and Insurance Companies ({\bf CNBS}) of Honduras.

\bibliographystyle{plain}
\bibliography{SPORT_FVides.bib}

\end{document}